\numberwithin{equation}{section}
\numberwithin{figure}{section}
\theoremstyle{plain}
\newtheorem{thm}{\protect\theoremname}
  \theoremstyle{definition}
  \newtheorem{defn}[thm]{\protect\definitionname}
  \theoremstyle{plain}
  \newtheorem{lem}[thm]{\protect\lemmaname}
  \theoremstyle{remark}
  \newtheorem{rem}[thm]{\protect\remarkname}
  \theoremstyle{definition}
  \newtheorem{example}{\protect\examplename}
  \theoremstyle{plain}
  \newtheorem{prop}[thm]{\protect\propositionname}
  \providecommand{\definitionname}{Definition}
  \providecommand{\examplename}{Example}
  \providecommand{\lemmaname}{Lemma}
  \providecommand{\propositionname}{Proposition}
  \providecommand{\remarkname}{Remark}
\providecommand{\theoremname}{Theorem}
\begin{document}

\title{Representing Model Ensembles as Boolean Functions}
\author{Robert Schwieger, Heike Siebert\\
Department of Mathematics, Freie Universität Berlin, Germany}

\maketitle

\global\long\def\glIG#1{IG^{\text{global}}(#1)}
\global\long\def\IG#1#2{IG_{#1}(#2)}
\global\long\def\sG#1{G_{\text{sync}}(#1)}
\global\long\def\sV#1{V_{\text{sync}}(#1)}
\global\long\def\sE#1{E_{\text{sync}}(#1)}
\global\long\def\aG#1{G_{\text{async}}(#1)}
\global\long\def\aV#1{V_{\text{async}}(#1)}
\global\long\def\aE#1{E_{\text{async}}(#1)}
\global\long\def\mG#1{G_{\text{QDE}}(#1)}
\global\long\def\mV#1{V_{\text{QDE}}(#1)}
\global\long\def\mE#1{E_{\text{QDE}}(#1)}
\global\long\def\pfi#1{\phi_{#1}}
\global\long\def\apfi#1{\tilde{\phi}_{#1}}
\global\long\def\redA#1{\tilde{G}_{#1}}
\global\long\def\redV#1{\tilde{V}_{#1}}
\global\long\def\redE#1{\tilde{E}_{#1}}
\global\long\def\dqG{G_{\text{QDE}}^{\text{Boolean}}}
\global\long\def\dqV{V_{\text{QDE}}^{\text{Boolean}}}
\global\long\def\dqE{E_{\text{QDE}}^{\text{Boolean}}}
\global\long\def\Z#1#2{\text{diff}(#1,#2)}
\global\long\def\supp#1#2{\text{comm}(#1,#2)}
\global\long\def\M{\mathscr{M}}
\global\long\def\MB{\M_{\mathbb{B}}}
\global\long\def\cMB{\overline{\M}_{\mathbb{B}}}

\global\long\def\z{\delta}
\global\long\def\invz{{\color{red}\Delta}}
\global\long\def\l{\varphi}
\global\long\def\p{\phi}
\global\long\def\P{\Phi}
\global\long\def\g{g}
\global\long\def\T{T}
\global\long\def\R{R}
\global\long\def\mux{\mu^{\Sigma}}
\global\long\def\pZ#1{\text{diff}(#1)}
\global\long\def\psupp#1{\text{comm}(#1)}

%\affiliation{$^1 $
%Zuse Institute Berlin, Berlin, D-14195, Germany}
%\\
%\affiliation{$^2 $
%Freie Universität Berlin, FB Mathematik und Informatik, D-14195, Germany}

\section*{Abstract}
\vspace{-2mm}
Families of ODE models $\dot{x}=f(x)$ characterized by a common sign
structure $\Sigma$ of their Jacobi matrix $J(f)$ are investigated
within the formalism of qualitative differential equations \cite{Eisenack2006}.
In the context of regulatory networks the sign structure of the Jacobi matrix carries the information
about which components of the network inhibit or activate each other.
Information about constraints on the behavior of models in this family
is stored in a so called qualitative state transition graph $\mG{\Sigma}$
\cite{Eisenack2006}. We showed previously that a similar approach
can be used to analyze a model pool of Boolean functions characterized
by a common interaction graph \cite{SchwiegerMonotonicModelPools}.
Here we show that the opposite approach is fruitful as well. We show
that the qualitative state transition graph $\mG{\Sigma}$ can be
reduced to a ``skeleton'' represented by a Boolean function $f^{\Sigma}$
conserving the reachability properties. This reduction
has the advantage that approaches such as model checking
and network inference methods can be applied to $\aG{f^{\Sigma}}$
within the framework of Boolean networks. Furthermore, our work constitutes an alternative to approaches like
\cite{snoussi_1989} and \cite{wittmann} to link
Boolean networks and differential equations.

\vspace{-2mm}
\section{Introduction}

Mathematical modeling in systems biology is often hampered by lack
of information on mechanistic detail and parameters. Several approaches
deal with this problem. Here we focus on the theory of qualitative
differential equations (QDE). A qualitative differential equation
model is an abstraction of a system of ordinary differential equation,
consisting of a set of real-valued variables and functional, algebraic
and differential constraints among them \cite{Kuipers2001}. In the
simplest case the objects of interest are systems of ordinary differential
equations (ODEs) consistent with a given signed interaction graph
$\Sigma\in\{-1,0,1\}^{n\times n}$, $n\in\mathbb{N}$ capturing dependencies
between system components and the type of influence exerted, activating
or inhibiting \cite{Eisenack2006}. These ODEs are collected in a
so called model ensemble. For this ensemble, a qualitative state transition
graph (QSTG) $\mG{\Sigma}$ can be constructed whose nodes represent
derivative signs of the system components and edges indicate possible
changes in the derivative over time. It can then be used to describe
the behavior of the ensemble. This will be explained in Section \ref{subsec:QDEgraph}. This constitutes a scenario of particular
interest in application, where interaction information is usually
more readily available than details on the processing logic of multiple
influences on a target component.

In this paper we show that the graph $\mG{\Sigma}$ can be reduced
to an asynchronous state transition graph of a Boolean function $f^{\Sigma}$.
This allows us to study the graph $\mG{\Sigma}$ with existing tools
for Boolean regulatory networks and to use theoretical results about
Boolean regulatory networks to analyze it.

Our paper is structured in the following way: In the first section
we state definitions and notions about Boolean regulatory networks.
Afterwards, we review existing results for monotonic ensembles in
the continuous setting and define a Boolean version of the qualitative
state transition graph of such an ensemble denoted by $\mG{\Sigma}$.
In Section \ref{sec:Skeleton-of-a} we introduce the skeleton of the
graph $\mG{\Sigma}$ and prove that no information about reachability
is lost during this reduction. Subsequently, we exploit the results
by using model checking and outline how the result could be used for
network inference.

\vspace{3mm}
\subsection{Boolean networks}

We denote with $f$ a Boolean function $\{0,1\}^{n}\rightarrow\{0,1\}^{n}$
and with $n\in\mathbb{N}$ the dimension of its state space. With
$[n]$ we denote the set $\{1,\dots,n\}$. For $v\in \{0,1\}^{n}, i\in [n]$ the value $v_i$ refers to the $i$-th component of the network.

Furthermore, we define for $v,w\in\{0,1\}^{n}$ the following sets:
\begin{defn}
\label{def:DiffAndComm}For
$v,w\in\{0,1\}^{n}$ the set $\Z vw$ contains all indices where $v$
and $w$ are different and the set $\supp vw$ the set of indices
where $v$ equals $w$:
\begin{align*}
\Z vw & :=\big\{ i\in[n]|v_{i}\not=w_{i}\big\},\\
\supp vw & :=\big\{ i\in[n]|v_{i}=w_{i}\big\}.
\end{align*}
\end{defn}
For a vector $v\in\{0,1\}^{n}$ the vector $v^{A}$ denotes its negation on the set of components $A\subset[n]$. More precisely:
\begin{defn}
\label{def:v_hoch_A}Let
$A\subset[n]$ and $v\in\{0,1\}^{n}$. Define $v^{A}\in\{0,1\}^{n}$
component wise like this:
\[
v_{i}^{A}=\begin{cases}
v_{i} & \text{if }i\not\in A\\
\neg v_{i} & \text{if }i\in A
\end{cases}.
\]
\end{defn}

We attribute to a Boolean function $f:\{0,1\}^{n}\rightarrow\{0,1\}^{n}$
a relation on $\{0,1\}^{n}$ in the following way:
\begin{defn}
\label{def:ASTG}We attribute to $f:\{0,1\}^{n}\rightarrow\{0,1\}^{n}$
an asynchronous state transition graph (ASTG)
$\aG f=\big(\aV f,\aE f\big)$ with
\[
\aV f:=\{0,1\}^{n}
\]
and 
\begin{align*}
\aE f & =\big\{(s,t)\in\aV f\times\aV f\big|\big(\{i\}=\Z st\\
 & \text{ and }f_{i}(s)=t_{i}\big)\text{ or }s=t=f(s)\big\}.
\end{align*}
\end{defn}
The ASTG captures a relation on the the set $\{0,1\}^{n}$. We can
describe this relation also with a function $\mu:\{0,1\}^{n}\rightarrow\{0,1\}^{n}$
such that for $i\in[n]$
\begin{align*}
(v,v^{\{i\}})\in\aE f & \Leftrightarrow\mu_{i}(v)\\
\end{align*}
holds.  To keep our notation simple and since we can identify a function $\{0,1\}^n \rightarrow \{0,1\}$ with a logical formula, we write $\mu_{i}(v)$ instead of $\mu_{i}(v) = 1$.
For $a,b\in\{0,1\}$ we define $a\oplus b:=a\text{ XOR }b:=\big(\neg a\wedge b\big)\vee\big(a\wedge\neg b\big)$.
Lemma~\ref{lem:ConditionFunctionEquivalence} describes how
we can obtain $\mu$ from $f$ and vice versa.

\begin{lem}
\label{lem:ConditionFunctionEquivalence}Assume $f:\{0,1\}^{n}\rightarrow\{0,1\}^{n}$.
Then for $v,v^{\{i\}}\in\{0,1\}^{n}$ and $i\in[n]$ it holds

\[
(v,v^{\{i\}})\in\aE f\Leftrightarrow\mu_{i}(v)
\]
and 
\[
(v,v)\in\aE f\Leftrightarrow\forall i\in[n]:\neg\mu_{i}(v)
\]
with $\mu_{i}(v):=v_{i}\oplus f_{i}(v)$.
\end{lem}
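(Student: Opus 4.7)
The plan is to prove each of the two equivalences by directly unfolding Definition~\ref{def:ASTG} and observing which of the two disjuncts in the definition of $\aE f$ can be satisfied in each case.

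For the first equivalence, I would start from the observation that $\Z{v}{v^{\{i\}}} = \{i\}$, which is immediate from Definition~\ref{def:v_hoch_A} since $v^{\{i\}}$ agrees with $v$ on every coordinate except $i$, where it is flipped. In particular $v \neq v^{\{i\}}$, so the self-loop disjunct $s = t = f(s)$ in the definition of $\aE f$ cannot hold. Hence $(v,v^{\{i\}}) \in \aE f$ reduces to the single condition $f_i(v) = v^{\{i\}}_i = \neg v_i$. This is equivalent to $v_i \oplus f_i(v) = 1$, which is exactly $\mu_i(v)$.

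For the second equivalence, I would argue that when $s = t = v$ we have $\Z v v = \emptyset$, so the first disjunct (which requires $\Z s t$ to be a singleton) is vacuous, and $(v,v) \in \aE f$ is equivalent to $v = f(v)$. Componentwise, $v = f(v)$ holds iff $v_i = f_i(v)$ for all $i \in [n]$, i.e.\ $v_i \oplus f_i(v) = 0$ for all $i$, which is $\forall i \in [n]: \neg \mu_i(v)$.

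I do not expect any real obstacle here; the statement is essentially a bookkeeping lemma translating Definition~\ref{def:ASTG} into the equivalent "condition function" formulation. The only point that deserves a sentence of care is making explicit that $v \neq v^{\{i\}}$, so that exactly one of the two disjuncts in the definition of $\aE f$ is live in each of the two cases. Once this is noted, both equivalences follow by rewriting $a = b \Leftrightarrow \neg(a \oplus b)$ for $a,b \in \{0,1\}$.
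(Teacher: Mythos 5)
Your proposal is correct and follows essentially the same route as the paper: unfold Definition~\ref{def:ASTG}, note which disjunct is live in each case, and rewrite $a \neq b$ as $a \oplus b$. The paper's own proof is just a more compressed version of the same case analysis (it phrases the first equivalence as $i \in \Z{v}{f(v)}$), so there is nothing to add.
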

\begin{proof}
Consider
\begin{align*}
(v,v^{\{i\}})\in\aE f\Leftrightarrow & i\in\Z v{f(v)}\\
\Leftrightarrow & v_{i}\oplus f_{i}(v)
\end{align*}
and 
\begin{align*}
(v,v)\in\aE f & \Leftrightarrow\forall i\in[n]:v_{i}=f_{i}(v)\\
 & \Leftrightarrow\forall i\in[n]:\neg(v_{i}\oplus f_{i}(v))
\end{align*}
\end{proof}
Expressing the relation represented by the graph $\aG f$ via a Boolean
function $\mu$ will turn out useful in the sequel. Similar to Lemma
\ref{lem:ConditionFunctionEquivalence} we can represent a graph constructed
from a function $\mu:\{0,1\}^{n}\rightarrow\{0,1\}^{n}$ as an ASTG
of a Boolean function:
\begin{rem}
Lemma~\ref{lem:ConditionFunctionEquivalence} implies that for a graph
$G=(V,E)$ defined by $V=\{0,1\}^{n}$ and
\begin{align*}
(v,v^{\{i\}}) & \in E:\Leftrightarrow\mu_{i}(v),\\
(v,v) & \in E:\Leftrightarrow\forall i\in[n]:\neg\mu_{i}(v)
\end{align*}
for $i\in[n]$ and $\mu:\{0,1\}^{n}\rightarrow\{0,1\}^{n}$ it holds
$\aG{\mu\oplus id}=G$.
\end{rem}
\begin{proof}
Choose $f=\mu\oplus id$ in Lemma~\ref{lem:ConditionFunctionEquivalence}.
\end{proof}

\vspace{3mm}
\subsection{\label{subsec:QDEgraph}The graph $\protect\mG{\Sigma}$}

Families of ODE models $\dot{x}=\overline{f}(x)$ characterized by
a common sign structure $\Sigma=(\sigma_{i,j})_{i,j\in[n]}\in\{-1,0,1\}^{n\times n}$
in its Jacobi matrix $J(\overline{f})$ can be investigated using
qualitative differential equations \cite{Eisenack2006}. Such families
of ODE models are called monotonic ensembles. Instead of the solutions
$x(\cdot)$ of the ODE-systems, so-called ''abstractions'' are considered.
Here, these abstractions are sequences of sign vectors of the derivatives
of the solutions. A state transition graph $\mG{\Sigma}$ on the sign
vectors can be constructed based on the sign matrix $\Sigma$, which
captures restrictions on the behavior of the solutions. We give here
a short review of the construction of $\mG{\Sigma}$. Since we are
here only interested in the properties of the object $\mG{\Sigma}$,
most definitions are skipped, but illustrated with an example. For
details and exact definitions we refer instead to \cite[Sec. 2.2.]{SchwiegerMonotonicModelPools}
and \cite[Chapter 2.1-2.2]{Eisenack2006}. For understanding the following
sections nothing more than the definition of $\mG{\Sigma}$ is necessary
which will be given in the end of this section.

We define an ensemble of ODE systems $\M(\Sigma)$ whose corresponding
Jacobi matrices share a sign structure. The usual sign operator is
denoted $[\cdot]:=sign(\cdot)$. We abstract the solutions of the
ODE-systems in the model ensemble to sequences of sign vectors that
describe the slope of their derivatives. 
\begin{example}
Let us assume our monotonic ensemble $\M(\Sigma)$ is characterized
by the sign structure $\Sigma=\begin{pmatrix}-1 & 0 & 0 & -1\\
1 & -1 & 0 & 0\\
0 & 1 & -1 & -1\\
0 & 0 & -1 & -1
\end{pmatrix}$. As demonstrated in \cite[Example 2]{SchwiegerMonotonicModelPools}
an example of such a function would be $\overline{f}(x)-x$ with $\overline{f}:[0,1]^{4}\rightarrow[0,1]^{4},(x_{1},x_{2},x_{3},x_{4})\mapsto(1-\frac{x_{4}}{x_{4}+0.5},\frac{x_{1}}{x_{1}+0.5},1-\frac{x_{4}}{x_{4}+0.5},1-\frac{x_{3}}{x_{3}+0.5})$.
I.e. $\big(\overline{f}(x)-x\big)\in\M(\Sigma)$ since it can be easily
shown that the abstraction of the Jacobi matrix of $\big(\overline{f}(x)-x\big)$
on $[0,1]^{n}$ is exactly $\Sigma$, i.e. $[J(\overline{f}(x)-x)]=\Sigma$.
In Fig.~\ref{fig:ExampleModelEnsemble} the solution of the corresponding
ODE system $\dot{x}(t)=\big(\overline{f}(x(t))-x(t)\big),t\in\mathbb{R}_{\geq0}$
with initial value $x_{0}=(0.6,0.6,0.6,0.6)$ is given. When we consider
only signs of slopes of this solution we obtain the trajectory of
sign vectors $(-1,-1,-1,-1)\rightarrow(1,-1,-1,-1)$$\rightarrow(1,1,-1,-1)$.
By taking the union of all such abstractions of the solutions of all
the ODE systems in the model ensemble $\M(\Sigma)$ we obtain the
graph $\mG{\Sigma}=\big(\{-1,1\}^{n},\mE{\Sigma}\big)$. However,
according to \cite[p. 25]{Eisenack2006}, \cite[Proposition 1]{SchwiegerMonotonicModelPools}
there is a way to construct the graph $\mG{\Sigma}$ without solving
any ODE.

\begin{figure}
\includegraphics[width=0.9\textwidth]{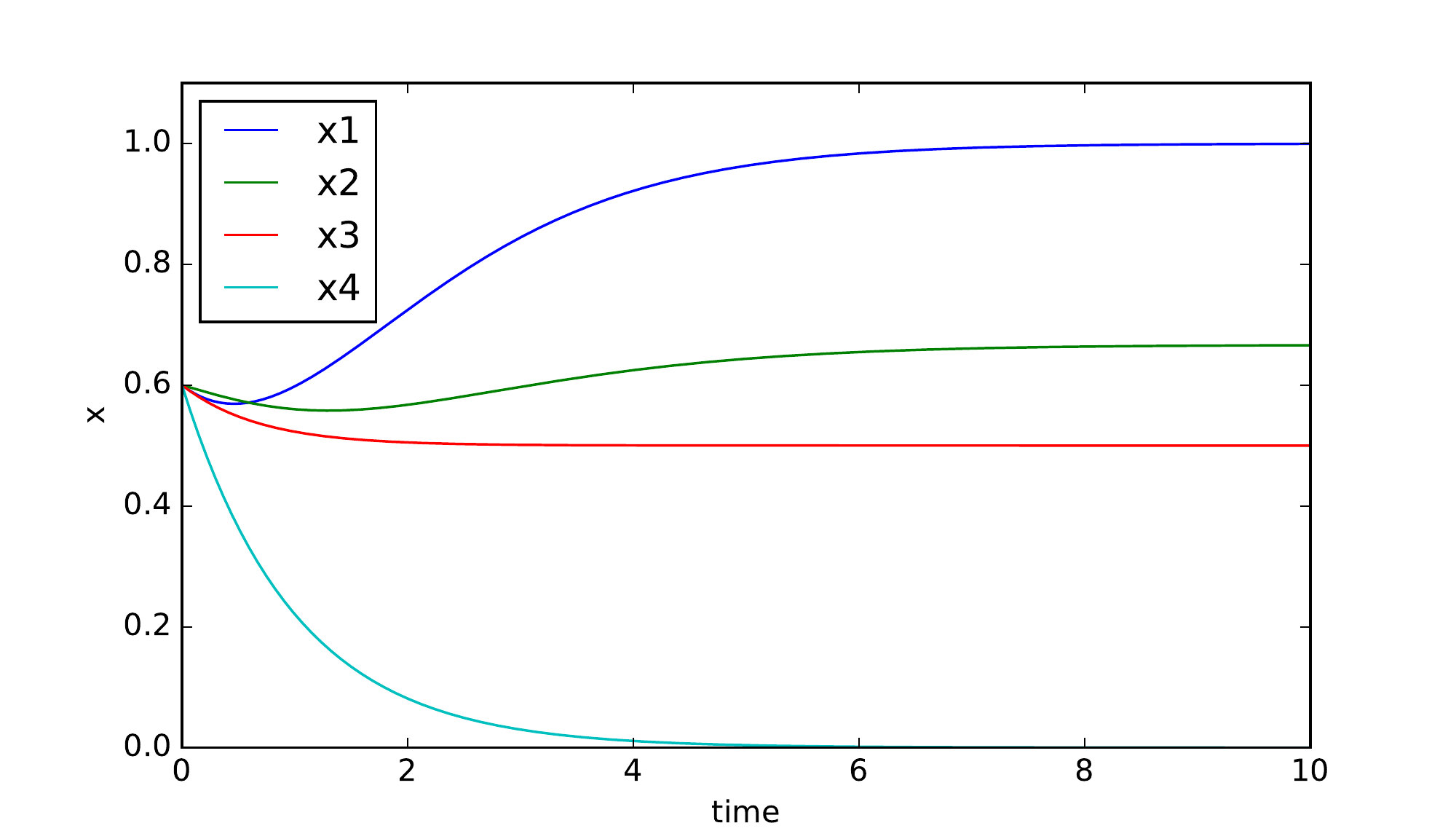}\caption{\label{fig:ExampleModelEnsemble} Trajectories of a solution of an
ODE in the model ensemble. Its abstraction is given by $(-1,-1,-1,-1)\rightarrow(1,-1,-1,-1)$$\rightarrow(1,1,-1,-1)$.
The ODE system was parametrized with $x_{0}=(0.6,0.6,0.6,0.6)$.}

\end{figure}
\end{example}
This result motivates the following graph:
\begin{defn}
For a sign matrix $\Sigma=(\sigma_{i,j})_{i,j\in[n]}\in\{-1,0,1\}^{n\times n}$we
define the graph $\overline{\mG{\Sigma}}=\big(\overline{\mV{\Sigma}},\overline{\mE{\Sigma}}\big)$
by
\[
\overline{\mV{\Sigma}}=\{-1,1\}^{n}
\]
 and
\begin{align}
(v,w)=e\in\overline{\mE{\Sigma}}\nonumber \\
:\Leftrightarrow\forall i\in\Z vw\exists j\in\supp vw:\big(v_{i}\cdot v_{j}\cdot(-1)\equiv\sigma_{i,j}\big)\label{eq:QDEcondition}
\end{align}
\end{defn}
The graph $\overline{\mG{\Sigma}}$ describes the dynamic restrictions on the solutions of the ODEs in the model ensemble imposed
by the sign matrix $\Sigma$.
Since we are here only interested in properties of the graph $\overline{\mG{\Sigma}}$,
we refer for details about the relation between the qualitative state
transition graph and its model ensemble to \cite{Eisenack2006}. Furthermore,
we note that there is no one-to-one correspondence between the qualitative
state transition graph and the corresponding sign matrix $\Sigma$.
It is possible to change elements on the diagonal of $\Sigma$ without
changing the graph $\overline{\mG{\Sigma}}$. This is due to the fact
that the sets $\Z vw$ and $\supp vw$ are disjoint and thus the diagonal
elements do not play a role in (\ref{eq:QDEcondition}). Consequently,
the edge set does not change when changing the diagonal of $\Sigma$.

Since we are interested in the relation of the QSTG and Boolean networks,
we redefine now the graph $\overline{\mG{\Sigma}}$ on the node set
$\{0,1\}^{n}$. 
\begin{defn}
\label{def:QDEGraph}For
a sign matrix $\Sigma=(\sigma_{i,j})_{i,j\in[n]}\in\{-1,0,1\}^{n\times n}$we
define the graph $\mG{\Sigma}=\big(\mV{\Sigma},\mE{\Sigma}\big)$
by 
\[
\mV{\Sigma}=\{0,1\}^{n}
\]
and
\begin{align}
(v,w)=e\in\mE{\Sigma}\big)\nonumber \\
:\Leftrightarrow\forall i\in\Z vw\exists j\in\supp vw:\big[\sigma_{i,j}\not=0\wedge\big(v_{i}\oplus^{\sigma_{i,j}}v_{j}\big),\label{eq:QDEcondition-1}
\end{align}
where $a\oplus^{-1}b:=\neg(a\oplus b)$ and $a\oplus^{1}b:=a\oplus b$
for $a,b\in\{0,1\}$.
\end{defn}
From the definition it is clear that the graph $\mG{\Sigma}$ and
$\overline{\mG{\Sigma}}$ are the same after relabeling $0$ to $-1$.
It is easy to see that conditions~(\ref{eq:QDEcondition-1}) and (\ref{eq:QDEcondition})
correspond to each other.
\begin{example}
In Fig.~\ref{fig:QDE_graph_running_example} the graph $\mG{\Sigma}$
from the previous example is depicted. We can see from this for example
that no solution in the model ensemble has a direct transition from
$(-1,1,1,-1)$ to $(-1,1,1,1)$, since there is no edge $(0,1,1,0)\rightarrow(0,1,1,1)$
in the graph. 
\begin{figure}[t]
\includegraphics[width=0.9\textwidth]{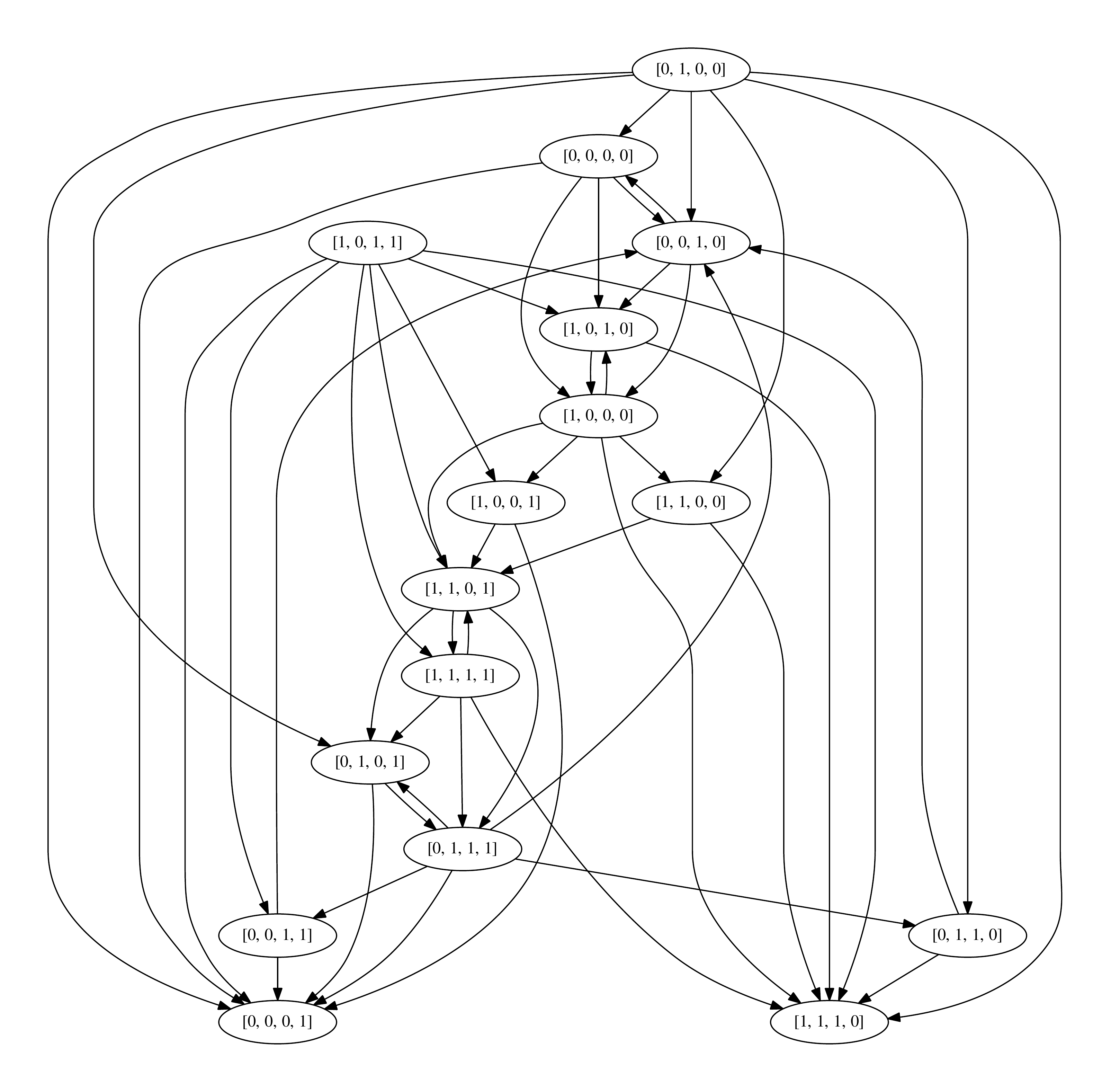}\caption{\label{fig:QDE_graph_running_example}The graph $\protect\mG{\Sigma}$
of the running example.}

\end{figure}
\end{example}

\vspace{3mm}
\section{\label{sec:Skeleton-of-a}Skeleton of a Model Ensemble}

In the previous section we defined the graphs $\mG{\Sigma}$ in such
a way that for an edge $e=(v,w)$, $v,w\in\{0,1\}^{n}$ with $\Z vw=A$
the following conditions hold:
\[
\forall i\in A\exists j\in A^{c}:\big(\sigma_{i,j}\not=0\wedge\big(v_{i}\oplus^{\sigma_{i,j}}v_{j}\big)\big)\Leftrightarrow e\in\mE{\Sigma}.
\]
This implies that there are potentially edges $(v,w)\in\mE{\Sigma}$
where the set $\Z vw$ has cardinality bigger than one. However, we
want to represent the graph $\mG{\Sigma}$ as an ASTG of a Boolean
function. Therefore, we show now in the sequel that we can delete
the edges $(v,w)$ with $| \Z vw |>1$ without loosing information about
reachability. $| \cdot |$ denotes the cardinality of a set. 

For the sets $A\subseteq[n]$ with cardinality one ($|A|=1$) these
conditions are given by the following logical formula:
\begin{align}
\mux_{i}(v)= & \exists j\in[n]\backslash\{i\}:\big(\sigma_{i,j}\not=0\wedge\big(v_{i}\oplus^{\sigma_{i,j}}v_{j}\big)\big)\nonumber \\
= & \vee_{j\in[n]\backslash\{i\}\text{ s.t. }\sigma_{i,j}\not=0}\big(v_{i}\oplus^{\sigma_{i,j}}v_{j}\big).\label{eq:conditionFunction}
\end{align}
We prove now that we can restrict the graph $\mG{\Sigma}$ to its
edges of the form $(v,v^{\{i\}})$, $v\in\{0,1\}^{n}$, $i\in[n]$
without loosing information about reachability.
\begin{prop}
\label{prop:ClosednessQDE}Assume there is an edge $(v,v^{A})\in\mE{\Sigma}$
with $A\subseteq[n]$, $A\not=\emptyset$. Then for each $B\subseteq A$,
$A\not=B\not=\emptyset$:
\begin{align*}
(v,v^{B}) & \in\mE{\Sigma},\\
(v^{B},v^{A}) & \in\mE{\Sigma}.
\end{align*}
\end{prop}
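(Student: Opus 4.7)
The plan is to unpack the edge condition in Definition~\ref{def:QDEGraph} and note that witnesses carry over under the assumption $B\subseteq A$. Write $A^c:=[n]\setminus A$ and similarly for $B$, and observe the simple set identities $\Z{v}{v^B}=B$, $\supp{v}{v^B}=B^c$, $\Z{v^B}{v^A}=A\setminus B$, and $\supp{v^B}{v^A}=(A\setminus B)^c$. The hypothesis $(v,v^A)\in\mE{\Sigma}$ gives, for every $i\in A$, an index $j(i)\in A^c$ with $\sigma_{i,j(i)}\neq 0$ and $v_i\oplus^{\sigma_{i,j(i)}}v_{j(i)}$. The two claims follow by re-using exactly these witnesses.

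For $(v,v^B)\in\mE{\Sigma}$: take any $i\in B$. Since $B\subseteq A$ we have $i\in A$, so the original witness $j(i)\in A^c\subseteq B^c=\supp{v}{v^B}$ is still admissible, and the condition $\sigma_{i,j(i)}\neq 0\wedge (v_i\oplus^{\sigma_{i,j(i)}}v_{j(i)})$ is literally the required one.

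For $(v^B,v^A)\in\mE{\Sigma}$: take any $i\in A\setminus B$. Again $i\in A$, so pick the witness $j(i)\in A^c$. Since $B\subseteq A$ we have $A^c\subseteq B^c\subseteq (A\setminus B)^c$, so $j(i)\in\supp{v^B}{v^A}$ is admissible. The only check is that the formula $v_i\oplus^{\sigma_{i,j(i)}}v_{j(i)}$ transfers to $v^B_i\oplus^{\sigma_{i,j(i)}}v^B_{j(i)}$. But $i\notin B$ (as $i\in A\setminus B$) and $j(i)\notin B$ (as $j(i)\in A^c\subseteq B^c$), so by Definition~\ref{def:v_hoch_A} both $v^B_i=v_i$ and $v^B_{j(i)}=v_{j(i)}$, and the formula is unchanged.

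Neither half requires anything deeper than bookkeeping, so I do not expect a real obstacle. The one place where care is needed is the $(v^B,v^A)$ part: one has to notice that the witnesses $j(i)$ must be chosen from $A^c$ (not merely from the complement of the difference set) precisely so that flipping the bits in $B\subseteq A$ does not touch the coordinate $v_{j(i)}$ used as the witness. Once this is flagged, the proof reduces to the two one-line verifications above.
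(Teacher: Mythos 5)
Your proposal is correct and follows essentially the same route as the paper: both halves reuse the witnesses $j(i)\in A^c$ from the hypothesis, using $A^c\subseteq B^c$ for the first edge, and for the second edge the observations that $v^B_i=v_i$ for $i\in A\setminus B$ and $v^B_{j}=v_{j}$ for $j\in A^c$ together with $A^c\subseteq (A\setminus B)^c$. No gaps.
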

\begin{proof}
We show first $(v,v^{B})\in\mE{\Sigma}$: Since $B\subseteq A\Leftrightarrow A^{c}\subseteq B^{c}$
\begin{align*}
(v,v^{A})\in\mE{\Sigma}\\
\Rightarrow\forall i\in A\exists j\in A^{c}:\big(\sigma_{i,j}\not=0\wedge\big(v_{i}\oplus^{\sigma_{i,j}}v_{j}\big)\big)\\
\Rightarrow\forall i\in B\exists j\in B^{c}:\big(\sigma_{i,j}\not=0\wedge\big(v_{i}\oplus^{\sigma_{i,j}}v_{j}\big)\big)\\
\Rightarrow(v,v^{B})\in\mE{\Sigma}
\end{align*}
For the second part we know due to $(v,v^{A})\in\mE{\Sigma}$:
\begin{align*}
\forall i\in A\exists j\in A^{c}:\big(\sigma_{i,j}\not=0\wedge\big(v_{i}\oplus^{\sigma_{i,j}}v_{j}\big)\big)\\
\end{align*}
Let us call $C:=\Z{v^{B}}{v^{A}}=A\backslash B$. We need to show
\begin{align*}
\forall i\in C\exists j\in C^{c}:\big(\sigma_{i,j}\not=0\wedge\big(v_{i}^{B}\oplus^{\sigma_{i,j}}v_{j}^{B}\big)\big)\\
\end{align*}

Before we start by noting two observations.\\
\textbf{1. Observation}: $\forall j\in A^{c}:v_{j}=v_{j}^{A}=v_{j}^{B}$\\
\\
\textbf{2. Observation}: $\forall i\in C:v_{i}=v_{i}^{B}$\\
\\
\\
Now we are ready to prove the statement of the Proposition.

Since $B\bigcup C=A$ we have $C\subseteq A$. It follows
\begin{align*}
\forall i\in A\exists j\in A^{c}:\big(\sigma_{i,j}\not=0\wedge\big(v_{i}\oplus^{\sigma_{i,j}}v_{j}\big)\big)\\
\Rightarrow\forall i\in C\exists j\in A^{c}:\big(\sigma_{i,j}\not=0\wedge\big(v_{i}\oplus^{\sigma_{i,j}}v_{j}\big)\big)\\
\end{align*}
And due to Observation 1 and 2 we obtain:
\begin{align*}
\Rightarrow\forall i\in C\exists j\in A^{c}:\big(\sigma_{i,j}\not=0\wedge\big(v_{i}^{B}\oplus^{\sigma_{i,j}}v_{j}^{B}\big)\big)\\
\end{align*}
Since $C\subseteq A\Leftrightarrow A^{c}\subseteq C^{c}$ finally
\begin{align*}
\Rightarrow\forall i\in C\exists j\in C^{c}:\big(\sigma_{i,j}\not=0\wedge\big(v_{i}^{B}\oplus^{\sigma_{i,j}}v_{j}^{B}\big)\big) & \Leftrightarrow(v^{B},v^{A})\in\mE{\Sigma}\\
\end{align*}
\end{proof}
Proposition~\ref{prop:ClosednessQDE} shows that we can restrict ourself
to the edges induced by the sets $\{1\},\dots,\{n\}$ without loosing
information about reachability of nodes in $\mG{\Sigma}$. This graph
can have significantly less edges. Since each edge is of the form $(v,v^{\{i\}})$,
it should be possible to represent this graph as the ASTG of a suitable
Boolean function, which we will call $f^{\Sigma}$. Indeed, due to
Lemma~\ref{lem:ConditionFunctionEquivalence} we define $f^{\Sigma}$
in the following way:
\begin{defn}
\label{def:Definition-f-Sigma}We
define the function $f^{\Sigma}:\{0,1\}^{n}\rightarrow\{0,1\}^{n}$
according to Lemma~\ref{lem:ConditionFunctionEquivalence}. I.e. for
$i\in[n]$:
\begin{equation}
f_{i}^{\Sigma}(v)=\mux_{i}(v)\oplus v_{i},\label{eq:DefinitionSkeleton}
\end{equation}
where $\mux_{i}(v):=\exists j\in[n]\backslash\{i\}:\big(\sigma_{i,j}\not=0\wedge\big(v_{i}\oplus^{\sigma_{i,j}}v_{j}\big)\big)$.
We call the graph $\aG{f^{\Sigma}}$ the skeleton of $\mG{\Sigma}$.
\end{defn}
For two nodes $v,w\in\{0,1\}^{n}$, $v\not=w$ there is a directed
path in $\aG{f^{\Sigma}}$ if and only if there is a path in $\mG{\Sigma}$.
The reduction of $\mG{\Sigma}$ to $\aG{f^{\Sigma}}$ has not only
the advantage that $\aE{f^{\Sigma}}$ can be significantly smaller
than $\mG{\Sigma}$ but also that certain structural features of $\aG{f^{\Sigma}}$
can be deduced directly from $f^{\Sigma}$. This includes attractors
\cite{Klarner_approximating_attractors,garg2008_attractors}, trap
spaces \cite{Klarner2015} and no-return sets, i.e., sets of states that no trajectory enters.

We implemented the generation of the Boolean function $f^{\Sigma}$
from $\Sigma$ into Python. The source code is available
in \url{https://github.com/RSchwieger/ASTG_from_IG}. We illustrate
the construction of $f^{\Sigma}$ with the following example:
\begin{example}
\label{exa:Sigma_f}Let's construct the functions $\mu^{\Sigma}$
and $f^{\Sigma}$ from the running example: 
\begin{align*}
\mux_{1}(v) & =\vee_{j\in\{4\}}\big(v_{1}\oplus^{\sigma_{1,j}}v_{j}\big)\\
 & =\neg(v_{1}\oplus v_{4})\\
\mux_{2}(v) & =\vee_{j\in\{1\}}\big(v_{2}\oplus^{\sigma_{2,j}}v_{j}\big)\\
 & =v_{1}\oplus v_{2}\\
\mux_{3}(v) & =\vee_{j\in\{2,4\}}\big(v_{3}\oplus^{\sigma_{3,j}}v_{j}\big)\\
 & =(v_{2}\oplus v_{3})\vee\neg(v_{3}\oplus v_{4})\\
\mux_{4}(v) & =\vee_{j\in\{3\}}\big(v_{4}\oplus^{\sigma_{4,j}}v_{j}\big)\\
 & =\neg(v_{3}\oplus v_{4})
\end{align*}
And 
\begin{align*}
f_{1}^{\Sigma}(v) & =v_{1}\oplus\mux_{1}(v)=\neg v_{4}\\
f_{2}^{\Sigma}(v) & =v_{2}\oplus\mux_{2}(v)=v_{1}\\
f_{3}^{\Sigma}(v) & =v_{3}\oplus\mux_{3}(v)=v_{3}\oplus\big((v_{2}\oplus v_{3})\vee\neg(v_{3}\oplus v_{4})\big)\\
f_{4}^{\Sigma}(v) & =v_{4}\oplus\mux_{4}(v)=\neg v_{3}
\end{align*}
In Fig.~\ref{fig:Skeleton-of-the-running-example} the skeleton of
the running example is depicted. The graph $\mG{\Sigma}$ has $50$
edges and its skeleton $\aG{f^{\Sigma}}$ has $36$
edges (without counting the two self -loops). 
\begin{figure}
\includegraphics[width=0.9\textwidth]{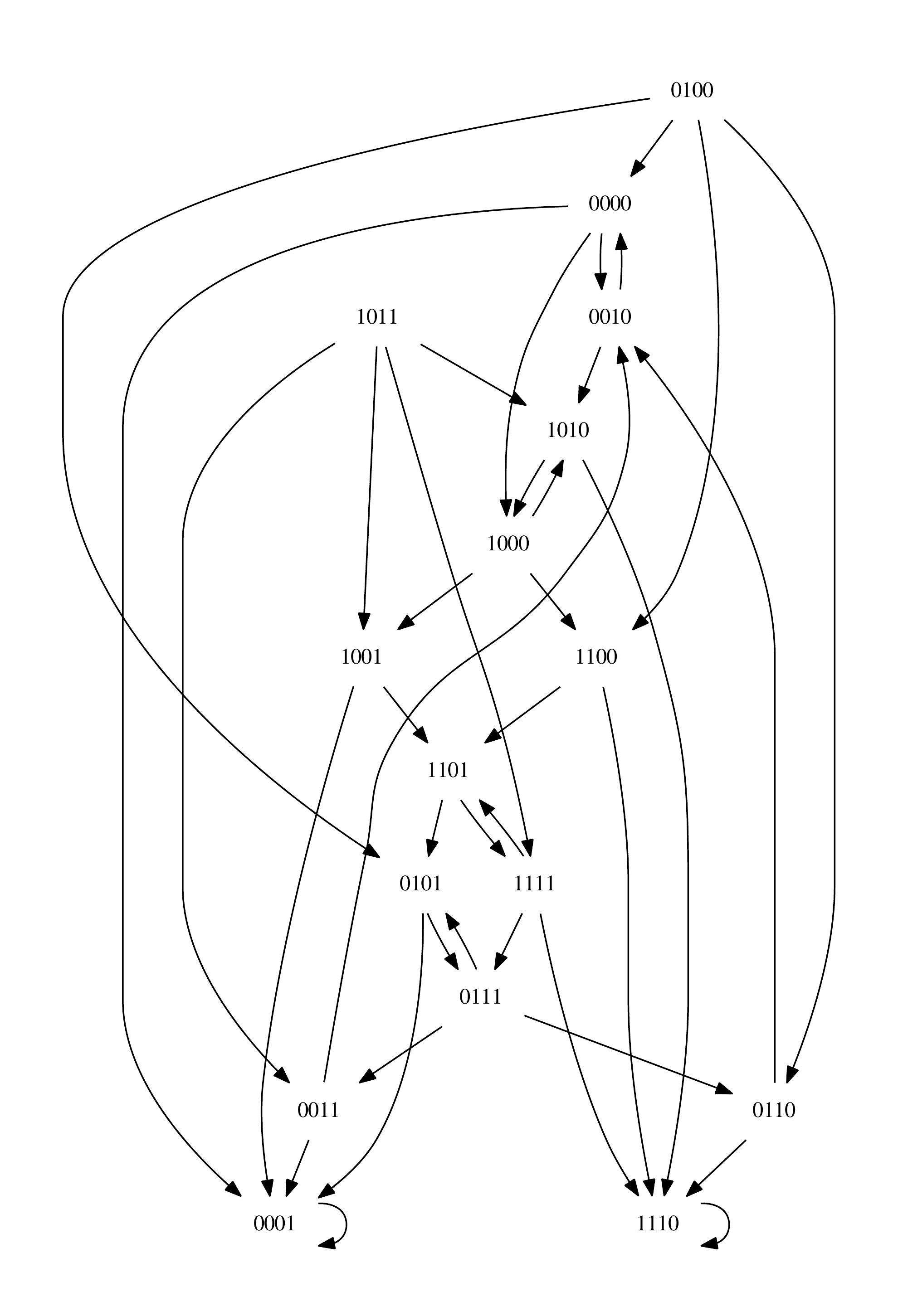}\caption{\label{fig:Skeleton-of-the-running-example}Skeleton of the running
example.}

\end{figure}
\end{example}

\section{\label{sec:Applications}Applications}

The reduction of graph $\mG{\Sigma}$ to its skeleton $\aG{f^{\Sigma}}$
allows us to use the asynchronous state transition graph to analyze
the graph $\mG{\Sigma}$. In order to demonstrate the usefulness of
the skeleton we select two possible applications.

\subsection{Consistency with time series data}

A common way to validate a model of a gene regulatory network is to test whether a given time series data is
compatible with it. Such time series data
can contain for example concentration levels of proteins or mRNA. In practice, often not all components of the regulatory network are measured and also the amount of measurements can be
relatively low compared to the size of the system. Due to the lack of precise empirical data, qualitative models are used frequently. Here it often plays only a role
which components of the system influence each other and what is the sign of their influences (activating or inhibiting).
In the framework of QDEs testing for compatibility translates to the task of testing whether the graph $\mG \Sigma$ possesses a trajectory containing
specific nodes representing our measurements. 
Thereby, the nodes of the graph $\mG \Sigma$ represent signs of activity trends of its components. The matrix $\Sigma$ contains the information
about the influences in the gene regulatory network. If the modeler does not find such a
trajectory in $\mG \Sigma$, the measurements are not compatible with the sign matrix $\Sigma$.

We can formulate the question of compatibility as a model checking query.
Model checking is a formal method from computer science that solves
the problem of deciding whether a temporal logic specification is
satisfied by a given transition system.  In the field of qualitative
differential equations it has been used to analyze the QSTG \cite{Kuipers2001}.
Our result shows that we can solve the problem of compatibility also
on the transition system induced by $f^{\Sigma}$.
However, we do not know if this offers any advantages with respect to the running time. In any case it is interesting to elucidate
in how far analysis methods and tools from these fields are related.

For a systematic introduction into model checking we
refer to \cite{Baier2008} and \cite[Chapter 3]{klarner2015contributions}.
Notions and definitions in this subsection follow roughly \cite{klarner2015contributions}.
\begin{defn}[{\cite[Def. 16]{klarner2015contributions}}]
A discrete time series is a sequence $P=(p_{0},\dots,p_{m})$ of
$m\in\mathbb{N}$ vectors $p_{i}\in\{0,1,?\}^{n}$,
where $p_{i}$ represents the i-th measurement of the experiment.
Components not measured are marked with the symbol $?$.
\end{defn}
For a vector $p\in\{0,1,?\}^{n}$ we denote with $S[p]$ the set $S[p]=\big\{ s\in\{0,1\}|\forall i\in[n]:\big((p_{i}\not=?)\rightarrow(p_{i}=s_{i})\big)\big\}$.
\begin{defn}
For a given sign matrix $\Sigma$, the graph $\mG{\Sigma}$ or $\aG{f^{\Sigma}}$
is called compatible with a time series $P=(p_{0},\dots,p_{m})$ iff
there is a sequence of states $(x_{0},\dots,x_{m})$ with $x_{i}\in S[p_{i}]$
such that for every $0\leq i<m$ there is a directed path starting
in $x_{i}$ and ending in $x_{i+1}$ in $\mG{\Sigma}$ or $\aG{f^{\Sigma}}$.
\end{defn}
The compatibility statement for a given times series $P$ can be translated
into a computation tree logic formula (CTL formula). For example the CTL
formula that queries whether the time series $(p_{0},p_{1},p_{2})$
is compatible with a given a model is defined by nesting $EF$ operators in the following way: 
\[
\phi:=p_0 \wedge EF(p_{1}\wedge EF(p_{2}))
\]
with initial state(s) defined by $p_{0}$.
The letter $E$ stands for the existential
``there is a path'' and the letter $F$ stands for the
``finally'' operators \cite[p. 31]{klarner2015contributions}.

\begin{defn}
The nested reachability query $R(P)$ for a time series $P=(p_{0},\dots,p_{m})$
is defined recursively by 
\begin{align*}
\phi_{m} & :=p_{m}\\
\phi_{m-t} & :=p_{m-t}\wedge EF\phi_{m-t+1},t=1,\dots,m
\end{align*}
and $R(P):=\phi_{0}$.
\end{defn}
Such queries can be implemented in PyBoolNet. In order to test compatibility of a time series $P$ with a given model $\Sigma$ we need
to test then if there exists a path satisfying $R(P)$.
We demonstrate this with our running example:
\begin{example}
Let us assume we are given a biological system with four species $v0,v1,v2$
and $v3$. We want to know if the matrix $\Sigma$ is a realistic
interaction graph for our biological system. Experiments suggest that there is
a trajectory from the subspace $11??$ to $0100$. We use here PyBoolNet
\cite{klarner2016pyboolnet}, which uses the model checker NuSMV. We can check this property with the CTL query
$\phi:=EF((v0 \& v1) \& EF(!v0 \& v1 \& !v2 \& !v3))$.\footnote{NuSMV uses the symbol $\&$ to denote a logical ``AND'' and $!$ to denote a negation.} Checking this query reveals that this is
not the case and therefore $\Sigma$ is not an accurate model for the considered biological system.
\end{example}

\subsection{Network inference}

Instead of testing whether a given model is compatible with time series data,
we could use the data to construct a model from scratch (network inference).
In our context this means we want to find a function $f^\Sigma$, which agrees with the given data. Then we can read off from
$f^\Sigma$ the matrix $\Sigma$. Due to the definition of $\mu^{\Sigma}$ in Eq. (\ref{eq:conditionFunction})
and $f^{\Sigma}$ in Eq. (\ref{eq:DefinitionSkeleton}), we can restrict the inference problem to the following set of Boolean functions:
\begin{align*}
I_{i}:= & \big\{ f:\{0,1\}^{n}\rightarrow\{0,1\}, x\mapsto x_{i}\oplus[\vee_{j\in A}(x_{i}\oplus^{\sigma_{j}}x_{j})]\\
 & |A\subseteq[n]\backslash\{i\},\forall j \in A:\sigma_{j}\in\{-1,1\}\big\},i\in[n],\\
I:= & \prod_{i=1}^{n}I_{i},
\end{align*}

Different approaches for solving
inference problems have been developed \cite{Kalman_filter_boolean_inference,Barman2017,inferring_boolean_networks_berestovsky2013evaluation,reveal},
which can be adopted to the set of Boolean functions $I$.

\section{Conclusion and outlook}

The graph $\mG{\Sigma}$ is useful in applications to find restrictions on the behavior of solutions
in the model ensemble induced by $\Sigma$. When analyzing the graph $\mG{\Sigma}$ one is
typically interested in statements about the reachability between nodes in $\mG{\Sigma}$.
We proved that for a reachability-analysis it is enough to consider a skeleton
$\aG{f^{\Sigma}}$, which is the ASTG of a Boolean function $f^{\Sigma}$. 
For this purpose we redefined the graph $\mG{\Sigma}$ stemming from \cite{Eisenack2006} over the state space $\{0,1\}^{n}$.
The graph $\aG{f^{\Sigma}}$
can have significantly less edges than $\mG{\Sigma}$.

This reduction allows us furthermore to use existing methods developed for Boolean networks for analyzing the graph $\mG{\Sigma}$.
We selected two applications -- consistency with time series data and network inference --
to demonstrate how this link can be used. Such potential applications should be further researched. Especially it is interesting to investigate 
in how far Boolean inference algorithms restricted
to the set $I$ perform in comparison to other Boolean inference algorithms.

\bibliographystyle{amsplain}
\bibliography{bibtex.bib}

\end{document}